\titleformat*{\section}{\large \bfseries}
\titlespacing{\section}{0pt}{\parskip}{\parskip}
\titleformat*{\subsection}{\normalsize \bfseries}
\titlespacing{\subsection}{0pt}{\parskip}{\parskip}
\numberwithin{equation}{section}
\newtheorem{theorem}{Theorem}[section]
\newtheorem{lemma}[theorem]{Lemma}
\newtheorem{corollary}[theorem]{Corollary}
\newtheorem{proposition}[theorem]{Proposition}
\title{\Large{Asian option as a fixed-point}}
\author{
Adriana Ocejo\thanks{Department of Mathematics and Statistics, UNC Charlotte, amonge2@uncc.edu, +1(704)687-1413.
This work was supported, in part, by funds provided by the University of North Carolina at Charlotte.}}
\date{April, 2018}
\begin{document}

\maketitle

\begin{abstract}
We characterize the price of an Asian option, a financial contract, as a fixed-point of a non-linear operator.
In recent years, there has been interest in incorporating changes of regime into the parameters describing the evolution of the underlying asset price,
namely the interest rate and the volatility, to model sudden exogenous events in the economy.
Asian options are particularly interesting because the payoff depends on the integrated asset price.
We study the case of both floating- and fixed-strike Asian call options with arithmetic averaging
when the asset follows a regime-switching geometric Brownian motion with coefficients that depend on a Markov chain.
The typical approach to finding the value of a financial option is to solve an associated system of coupled partial differential equations.
Alternatively, we propose an iterative procedure that converges to the value of this contract with geometric rate using a classical fixed-point theorem.
\end{abstract}

{\bf Key words.} Asian option, Markov-modulated, regime-switching, fixed-point, floating-strike, fixed-strike, integrated geometric Brownian motion.

{\bf AMS subject classification (2010).} Primary 91B70, 47H10; Secondary 60J60.

\section{Introduction}

In this paper we use a fixed-point theorem to characterize the price of a floating-strike Asian call option (defined below) when the interest rate and volatility of the underlying asset are subject to changes of regime during the pricing period.
We next formulate the problem precisely.

Let $(\Omega,\mathcal{F},P)$ be a probability space which supports a Brownian motion $B=(B_t)_{t\geq 0}$
and a continuous-time Markov chain $Y=(Y_t)_{t\geq 0}$ independent of $B$ with finite state space $\mathcal{M}=\{1,2,\ldots,m\}$ and generator $Q=(q_{i j})_{m\times m}$,
\[
q_{i j}\geq 0 \quad \mbox{for $i\neq j$}, \qquad \sum_{j\in \mathcal{M}} q_{i j}=0, \qquad q_i:=-q_{i i}\geq 0.
\]

Suppose that under $P$, the underlying asset price follows the regime-switching geometric Brownian motion
\[
dX_t=X_t[ (r(Y_t)-\delta)\,dt+\sigma(Y_t)dB_t\,], \qquad 0\leq t\leq T,
\]
where $r(i)>0$ and $\sigma(i)>0$ denote the risk-free interest rate and the volatility at regime $i$, respectively,
and $\delta\geq 0$ is the dividend rate. Denote by $\mathcal{F}_t$ the sigma-algebra generated by $\{(X_u,Y_u):0\leq u\leq t\}$.

Throughout this paper we fix a time $t_0\in [0,T)$, and define the integrated process
\[
A_t:=\int_{t_0}^t X_u\,du, \qquad t_0\leq t\leq T
\]
and for convenience, we extend the definition of $A$ to $A_t=0$ for $t\in [0,t_0]$.

The European call option has payoff $(X_T-K)^+$ at time $T$, where $K>0$ is a fixed strike.
An Asian option is a path-dependent European-style option, where
the payoff depends on the average of past prices during the time interval $[t_0,T]$.
Asian options are mainly classified as {\it fixed-strike} (when $X_T$ is replaced by $A_T$ and the strike $K$ is fixed) or {\it floating-strike} (when $K$ is replaced by $A_T$).
In this paper we study both cases.

More precisely, the price at time $s\in[0,T]$ of an Asian call option with floating-strike expiring at $T$ is given by
\begin{equation} \label{eq:Asianprice}
C(s,x,a,i)=\mathbb{E}_{s,x,a,i}\left[ e^{-\int_{s}^T r(Y_u)du} \left(X_T-\frac{A_T}{T-t_0} \right)^+\right],
\end{equation}
while an Asian call option with fixed-strike expiring at $T$ with strike price $K$ is
\begin{equation} \label{eq:Asianprice_fixed}
C_K(s,x,a,i)=\mathbb{E}_{s,x,a,i}\left[ e^{-\int_{s}^T r(Y_u)du} \left(\frac{A_T}{T-t_0} -K\right)^+\right],
\end{equation}
where we use the notation $\mathbb{E}_{s,x,a,i}[\cdot]$ for $\mathbb{E}[\,\cdot\mid X_s=x,A_s=a,Y_s=i]$, $x>0, a\geq 0$.
The options are referred to as {\it starting} when $s=t_0$, {\it in-progress} when $s>t_0$, and {\it forward-starting} when $s<t_0$.

Regime-switching processes in finance were initially proposed by Hamilton in his economic studies with discrete time models
on the effect of incorporating shifts in the parameters of the model via an unobserved discrete time two-state Markov chain (see \cite{Ham88},\cite{Ham89}).
Since then, several pricing methods for financial instruments have emerged under the assumption of regime-switching coefficients.
Such models successfully incorporate sudden changes in the economy and compensate some of the drawbacks of the classical Black-Scholes model due to the constancy of the drift and volatility parameters.
To mention some literature, Buffington and Elliott \cite{BE2002b}, Yao et al. \cite{YZZ2006}, and Zhu et al. \cite{Zhuetal2012}
concentrate on vanilla European options;
Guo and Zhang \cite{GZ2004} study perpetual American put options; and
Chan and Zhu \cite{CZ2015} deal with barrier options.


Despite the prominence of regime-switching models, the literature on Asian options within this context is scarce. 
Some work has been done for the class of fixed-strike Asian options, see for instance Boyle and Draviam \cite{BD2007} and Dan et al. \cite{Danetal2016}.
The pricing methods typically require solving a system of coupled PDEs.
In this paper, we explore an alternative approach to pricing floating-strike Asian call options with regime-switching,
based on the fixed-point theorem for Banach spaces with the supremum norm and when the number of states $m$ is arbitrary.
The initial value in the algorithm is precisely the price of a fixed-strike Asian option without regime-switching,
which has been more extensively studied in the literature.
For instance, Geman and Yor \cite{GY1993} were able to give an expression for the Laplace transform of a \textit{normalized} fixed-strike Asian call option by exploiting probabilistic properties of Bessel processes. The normalized Asian option involves the expectation of a function of Yor's process $A^\nu_t$, see (\ref{Yorprocs}) below.
Then the price of the option can be obtained by inversion of the Laplace transform, although they noted that such inversion was not easy.
Later on, Carr and Schr\"{o}der \cite{CS2004} built on Laplace transform techniques and provided an explicit integral representation of the price.
The same year, Linetsky \cite{Lin2004} took a different approach and showed that the normalized price
is the limit of up-and-out options on the diffusion $X$, 
each of which is given as a series representation of known special functions.
More recently, Cai et al. \cite{Caietal} obtained an algorithm to price Asian options based on an approximating continuous-time Markov chain sequence that converges to the underlying asset price process.
Other authors have provided price bounds, see for instance Rogers and Shi \cite{RogersShi} who use iterated conditional expectation.

The paper is organized as follows.
In Section \ref{Prelim}, convenient upper bounds of the call options are derived as well as
a symmetry relationship, in the context of no regime switching, between the starting floating-strike call and a fixed-strike put.
These upper bounds are crucial in order to ensure the relevant functions belong to the domain of a certain contraction operator after scaling.
We summarize the main results of the paper at the end of this section.

Next, in Section \ref{Conditional} we split the functions $C$ and $C_K$ into two parts, one that restricts the payoff to the event that the Markov chain jumps before maturity and the other to the complementary event where the Markov chain does not jump in the lifetime of the option.
In Section \ref{Dist} we find the joint density of the pair $(Z_t,A_t)$, where $Z_t=\log(X_t/x)$,
given the information up to time $s$ for $t>s$ and given that the first jump time of the Markov chain after $s$ happens at time $t$.
We use this density in Section \ref{Aprox}, where we characterize the functions $C$ and $C_K$ as the limit (in the supremum norm) of a sequence whose initial point is in terms of the price of an Asian call option without regime-switching. The contraction operator is a nonlinear operator expressed as a triple integral that accounts for the jumps to different states before maturity.

The ideas in this paper are motivated by the method used by Yao et el. \cite{YZZ2006} applied to price vanilla European options.
The difficulty in our context stems from the fact that Asian options are path-dependent and the joint density of geometric Brownian motion and its integrated process is required.
Nonetheless, the fixed-point theorem approach works well in this setup and we are able to show that the rate of convergence of the sequence is geometric.
Proof of preliminary lemmas appear in the Appendix.

\section{Preliminaries and main results} \label{Prelim}

It is known that European call options are bounded above by the current price of the underlying process.
This is also true for the floating-strike option in (\ref{eq:Asianprice}) and for the fixed-strike option in (\ref{eq:Asianprice_fixed}) up to a constant,
and will be used in the fixed-point approximation.
The proofs of the next lemmas are presented in the Appendix.

\begin{lemma} \label{lemma:bound}  
For any initial condition $(s,x,a,i)$ with $s\in[0,T]$, $x>0$, $a\geq 0$,
\begin{equation} \label{eq:bound}
C(s,x,a,i)\leq x.
\end{equation}
\end{lemma}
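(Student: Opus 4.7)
The plan is to reduce the bound to a statement about the discounted underlying price, exploiting two obvious drops in the integrand. First I would note that both $X_T$ and $A_T/(T-t_0)$ are nonnegative, so
\[
\left(X_T - \frac{A_T}{T-t_0}\right)^+ \leq X_T,
\]
and therefore $C(s,x,a,i) \leq \mathbb{E}_{s,x,a,i}\bigl[e^{-\int_s^T r(Y_u)\,du}\, X_T\bigr]$. The problem is thus reduced to showing that this expectation is at most $x$.

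For the key step, I would compute the discounted expectation by conditioning on the whole trajectory of $Y$ on $[s,T]$. Since $B$ is independent of $Y$, conditional on $\{Y_u : s \leq u \leq T\}$ the SDE for $X$ becomes a time-inhomogeneous geometric Brownian motion with deterministic coefficients, and Itô's formula (or an explicit exponential formula) gives
\[
X_T = x \exp\!\left(\int_s^T \bigl(r(Y_u)-\delta - \tfrac12 \sigma^2(Y_u)\bigr)\,du + \int_s^T \sigma(Y_u)\,dB_u\right).
\]
Then the quantity $e^{-\int_s^T (r(Y_u)-\delta)\,du} X_T / x$ is the Doléans-Dade exponential of $\int_s^{\cdot} \sigma(Y_u)\,dB_u$ evaluated at $T$; since $\sigma$ is bounded on the finite state space, Novikov's condition is trivial and this exponential has conditional expectation one. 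Factoring out the deterministic dividend discount,
\[
\mathbb{E}_{s,x,a,i}\!\left[e^{-\int_s^T r(Y_u)\,du}\, X_T \,\Big|\, \{Y_u\}_{s\leq u \leq T}\right] = x\, e^{-\delta(T-s)}.
\]

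Taking expectation over the $Y$-trajectory leaves $x\,e^{-\delta(T-s)} \leq x$ because $\delta \geq 0$ and $s \leq T$, which combined with the first step yields $C(s,x,a,i) \leq x$. The only mildly delicate point is justifying the conditioning argument rigorously given the independence of $B$ and $Y$ — but this is standard for Markov-modulated diffusions and follows by enlarging the filtration with the entire path of $Y$ on $[s,T]$, under which $B$ remains a Brownian motion. Apart from this, the proof is essentially a one-line supermartingale argument, and I do not anticipate any real obstacle.
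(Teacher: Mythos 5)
Your proof is correct and rests on the same fact as the paper's: the dividend-discounted underlying price has $\mathbb{E}_{s,x,a,i}\bigl[e^{-\int_s^T r(Y_u)\,du}X_T\bigr]=x\,e^{-\delta(T-s)}\leq x$, combined with $\delta\geq 0$. The paper phrases this via a change of measure with density $\mathcal{E}_T$ and then bounds $\bigl(1-\tfrac{A_T}{(T-t_0)X_T}\bigr)^+\leq 1$, while you drop the strike term first and evaluate the conditional expectation given the $Y$-path directly — two presentations of the same argument.
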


Define the call option conditional on the chain having no jump in the interval $[s,T]$,
\[
C^0(s,x,a,i):=\mathbb{E}_{s,x,a,i}\left[ e^{-r(i)(T-s)} \left(X_T-\frac{A_T}{T-t_0}\right)^+ \,\mid\, Y_t=i,\,\forall t\in[s,T]\right].
\]

When the option is starting or forward-starting, it is possible to establish a symmetry between the associated floating-strike call option $C^0(s,x,0,i)$
and a fixed-strike Asian put option, for each $i\in \mathcal{M}$.
When the option is in-progress, it is equivalent to a \textit{generalized starting option} (see (\ref{eq:gralsymmetryC0}) below).
This type of symmetry results were studied, for instance, by Henderson and Wojakowski \cite{HendW2002} and Henderson et al. \cite{Hendetal2007} in the classical setup without regime switching.

The proof of the next lemma is included for completeness of presentation but similar arguments are used in \cite{Hendetal2007}.

\begin{lemma} \label{lemma:sym} 
For any initial condition $(s,x,a,i)$ with $s\in[0,T]$, $x>0$, $a\geq 0$,
\begin{equation} \label{eq:gralsymmetryC0}
C^0(s,x,a,i)=\mathbb{E}^*_{s,x,a,i}\left[e^{-\delta(T-s)} \left(x-\lambda\, X_T^*-\beta\,\frac{1}{T-s}\int_s^T X_u^*du \right)^+\right]
\end{equation}
where $\lambda=\frac{a}{x(T-t_0)}$ and $\beta=\frac{T-s}{T-t_0}$ and the expectation $\mathbb{E}^*$ is with respect to an equivalent martingale measure $P^*$  
under which
$X^*$ solves the stochastic differential equation
\[
dX^*_t=X^*_t[(\delta-r(i))dt+\sigma(i)dB^*_t], \quad X^*_{s}=x, \qquad t\geq s.
\]
In particular, if the option is starting ($s=t_0$) or forward-starting ($s<t_0$)
then $a=0$ and the floating-strike call option $C^0$ is equivalent to a {\it fixed-strike} put option. Specifically,
\begin{equation}\label{eq:symmetryC0}
C^0(s,x,0,i)=\mathbb{E}^*_{s,x,0,i}\left[e^{-\delta(T-t_0)} \left(x-\beta \frac{A_T^*}{T-t_0} \right)^+\right]
\end{equation}
where $A_T^*=\int_{t_0}^T X_u^*\,du$, and $\beta\geq 1$.
\end{lemma}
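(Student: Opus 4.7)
The plan is to prove (\ref{eq:gralsymmetryC0}) by a change-of-numeraire argument combined with the time-reversal symmetry of Brownian motion, and then to obtain (\ref{eq:symmetryC0}) as the special case $s=t_0$, $a=0$ (which forces $\lambda=0$, $\beta=1$, and collapses the remaining term into $A^*_T/(T-t_0)$). Since the conditional expectation defining $C^0$ is taken on the event $\{Y_t\equiv i \text{ on }[s,T]\}$, the underlying $X$ behaves there as a classical geometric Brownian motion with constant coefficients $r(i),\delta,\sigma(i)$. First I would introduce an equivalent measure $P^*$ via the mean-one Doléans-Dade martingale
\[
\frac{dP^*}{dP}\bigg|_{\mathcal{F}_T}=\frac{X_T}{x}\,e^{-(r(i)-\delta)(T-s)},
\]
so that, by Girsanov's theorem, $B^*_u:=B_u-\sigma(i)(u-s)$ is a $P^*$-Brownian motion on $[s,T]$. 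Applying this change of measure to the expectation and factoring $X_T$ out of the payoff gives
\[
C^0(s,x,a,i)=e^{-\delta(T-s)}\,\mathbb{E}^*\!\left[\left(x-\frac{ax}{(T-t_0)X_T}-\frac{1}{T-t_0}\int_s^T\frac{xX_u}{X_T}\,du\right)^{\!+}\right].
\]

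The central step is to recognize $\{xX_u/X_T\}_{u\in[s,T]}$ as (in $P^*$-law) a diffusion with the drift and discount rate swapped. Writing $X_u$ explicitly under $P^*$ in terms of $B^*$, computing $xX_u/X_T$, and substituting $v:=T+s-u$, I would use the fact that $\tilde B_v:=B^*_T-B^*_{T+s-v}$ is itself a Brownian motion on $[s,T]$ (with $\tilde B_s=0$) to conclude that the reindexed process has the same $P^*$-law as the solution of
\[
dX^*_v=X^*_v\bigl[(\delta-r(i))\,dv+\sigma(i)\,d\tilde B_v\bigr],\quad X^*_s=x.
\]
Under this identification the boundary $u=s$ corresponds to $v=T$ (so $x/X_T\stackrel{d}{=}X^*_T/x$), and the change of variables $du=-dv$ converts $\int_s^T xX_u/X_T\,du$ into $\int_s^T X^*_v\,dv$. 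Substituting into the preceding display and inserting $\lambda=a/(x(T-t_0))$ and $\beta=(T-s)/(T-t_0)$ produces exactly (\ref{eq:gralsymmetryC0}).

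I expect the main subtlety to be justifying that the replacement of $\{xX_u/X_T\}$ by $X^*$ inside the payoff is valid, i.e., that the time-reversal yields equality of \emph{joint} distributions of the entire trajectory rather than just of $X_T$ and $\int_s^T X_u\,du$ separately. This reduces to the standard reversal identity for Brownian motion on a fixed interval and is routine once the deterministic drift and volatility constants are tracked carefully. The specialization to $s=t_0$, $a=0$ then yields (\ref{eq:symmetryC0}) immediately, since the $\lambda X^*_T$ term drops out and $\beta/(T-s)=1/(T-t_0)$, leaving exactly $A^*_T/(T-t_0)$ inside the payoff.
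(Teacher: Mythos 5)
Your proposal is correct and follows essentially the same route as the paper: a change of measure using the exponential martingale $X_T/(x\,e^{(r(i)-\delta)(T-s)})$ (which, on the event of no regime switch, is exactly the paper's $\mathcal{E}_T\mathcal{E}_s^{-1}$), followed by the Brownian time-reversal $v\mapsto T+s-v$ to identify $\{xX_u/X_T\}$ in law with $X^*$. The paper's proof makes the same two moves and carries out the change-of-variables computation you sketch, so there is no substantive difference.
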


There are well-known methods for fixed-strike options without switching coefficients as in (\ref{eq:symmetryC0}), some works have been cited in the introduction.
Using any of such methods, in conjunction with the so-called put-call parity for fixed-strike Asian options (see \cite[p.220]{Kwok}),
the value of $C^0$ in (\ref{eq:symmetryC0}) can be computed.
In contrast, methods to compute (\ref{eq:gralsymmetryC0}) explicitly are less accessible.
In a recent work by Funahashi and Kijima \cite{FK2017}, they provide an approximation method for generalized Asian options by applying a so-called chaos expansion approach.
Monte Carlo methods can be used as a benchmark when there are no closed-form formulas, see \cite{Lap&Tem}.

Now we turn to the case of fixed-strike options.

\begin{lemma} \label{lemma:bound_fixed}  
Fix an initial condition $(s,x,a,i)$ with $s\in[0,T]$, $x>0$, $a\geq 0$.
\begin{itemize}
    \item[(i)] If $s\leq t_0$ then $a=0$ and $C_K(s,x,0,i)\leq x$.

    \item[(ii)] If $s>t_0$ then
    \[
    C_K(s,x,a,i)\leq \frac{a}{T-t_0}+x.
    \]
\end{itemize}
\end{lemma}

We conclude this section with a summary of the main results in the paper, without stating the technical details which we examine in the subsequent sections.
Consider the Banach space $\mathcal{S}$ of all bounded measurable functions $H:E\mapsto \mathbb{R}$, $E=[0,T]\times \mathbb{R}\times \mathbb{R}_+\times \mathcal{M}$, with the supremum norm
\[
||H||:=\sup_{(s,z,a,i)\in E}|H(s,z,a,i)|.
\]
Let $F: \mathcal{S} \mapsto \mathcal{S}$ be defined by
\begin{equation} \label{eq:mapping}
\begin{split}
F(H)(s,z,a,i)& := \\
& \hspace{-2cm} \sum_{j\neq i}q_{i j} \int_s^T e^{-[q_i+r(i)](t-s)}\, \int_a^\infty \int_{-\infty}^\infty e^{z'}H(t,z+z',a',j)\,\psi(z',a')\, dz'\,da'\,dt
\end{split}
\end{equation}
\noindent where $\psi$ is a joint density function to be derived later (see Proposition \ref{prop:joint_density} below).

We now state the main result.

\begin{theorem} [Contraction] \label{thm:main_intro}
\begin{itemize}
    \item[(i)] $F$ is a contraction mapping on $\mathcal{S}$.

    \item[(ii)] If $H_0,H\in\mathcal{S}$ and $H$ solves the equation
    \[
    H(s,z,a,i)=F(H)(s,z,a,i)+H_0(s,z,a,i)
    \]
    then $H$ is the only solution.

    \item[(iii)] The sequence $\{H_n\}_{n=0}^\infty$, with
    \[
    H_{n+1}(s,z,a,i)=F(H_n)(s,z,a,i)+H_0(s,z,a,i),
    \]
    converges to the fixed-point $H$ with geometric rate of converge
    \begin{equation}\label{eq:rate_conv}
    \rho=\max_{i\in \mathcal{M}} \sum_{j\neq i}\frac{q_{i j}}{q_i+\delta} \left(1-e^{-(q_i+\delta)(T-s)}\right)<1.
    \end{equation}
\end{itemize}
\end{theorem}

We specialize to the floating-strike Asian options below.

\begin{theorem}[Floating-strike option as a fixed-point]\label{thm:floating}
Define the functions $g,g_0 \in \mathcal{S}$ by
\[
\begin{aligned}
g(s,z,a,i)  &:=e^{-z}C(s,e^z,a,i), \\
g_0(s,z,a,i)&:=e^{-q_i(T-s)}e^{-z}C^0(s,e^z,a,i).
\end{aligned}
\]
Then $g$ is the fixed-point of
    \[
    g(s,z,a,i)=F(g)(s,z,a,i)+g_0(s,z,a,i).
    \]
Moreover, the sequence $\{g_n\}_{n=0}^\infty$, with $g_{n+1}=F(g_n)+g_0$
converges to $g$ with geometric rate of converge $\rho$ in (\ref{eq:rate_conv}).
\end{theorem}

For fixed-strike Asian options, the initial condition of the approximating sequence depends on whether the option is starting, forward-starting or in-progress.
More precisely, we have the following statement.

\pagebreak
\begin{theorem}[Fixed-strike option as a fixed-point]\label{thm:fixed}

\noindent (i) If $s\leq t_0$, define the functions $h,h_0 \in \mathcal{S}$ by
    \[
    \begin{aligned}
    h(s,z,a,i)  &:=e^{-z}C_K(s,e^z,a,i), \\
    h_0(s,z,a,i) &:=e^{-q_i(T-s)}e^{-z}C_K^0(s,e^z,a,i).
    \end{aligned}
    \]
    Then $h$ is the fixed-point of
    \[
    h(s,z,a,i)=F(h)(s,z,a,i)+h_0(s,z,a,i).
    \]
    Moreover, the sequence $\{h_n\}_{n=0}^\infty$, with $h_{n+1}=F(h_n)+h_0$
    converges to $h$ with geometric rate of converge $\rho$ in (\ref{eq:rate_conv}).

\noindent (ii) If $s>t_0$, define the functions $\tilde{h},\tilde{h}_0 \in \mathcal{S}$ by
    \[
    \begin{split}
    \tilde{h}(s,z,a,i)  &:=e^{-z}\left(C_K(s,e^z,a,i)-\frac{a}{T-t_0}\right), \\
    \tilde{h}_0(s,z,a,i) &:=e^{-q_i(T-s)}e^{-z}\left(C^0_K(s,e^z,a,i)-\frac{a}{T-t_0}\right)+\tilde{h}^1(s,z,a,i)
    \end{split}
    \]
    where
    {\small
    \begin{equation} \label{extra_term}
    \tilde{h}^1(s,z,a,i)=\frac{ae^{-z}}{T-t_0}\left[\frac{q_i}{q_i+r(i)}\left(1-e^{(q_i+r(i))(T-s)}\right)+e^{-q_i(T-s)}-1\right].
    \end{equation}}
    Then $\tilde{h}$ is the fixed-point of
    \[
    \tilde{h}(s,z,a,i)=F(\tilde{h})(s,z,a,i)+\tilde{h}_0(s,z,a,i).
    \]
    Moreover, the sequence $\{\tilde{h}_n\}_{n=0}^\infty$, with $\tilde{h}_{n+1}=F(\tilde{h}_n)+\tilde{h}_0$
    converges to $\tilde{h}$ with geometric rate of converge $\rho$ in (\ref{eq:rate_conv}).
\end{theorem}

\section{Conditioning on the first jump time} \label{Conditional}

Let us fix the current time $s\in[0,T]$ throughout the rest of the paper.
Conditional on $Y_s=i$, let $\tau$ denote the first jump time of the Markov chain $Y$ after time $s$, that is
\[
\tau=\inf\{t>s:\, Y_t\neq i\}.
\]
We know that $\tau$ has exponential distribution with parameter $q_i$.
Plainly,
\begin{equation} \label{eq:bondpriceSplit}
\begin{aligned}
C(s,x,a,i) & = e^{-q_i(T-s)}C^0(s,x,a,i) \\
        & + \mathbb{E}_{s,x,a,i}\left[\, e^{-\int_s^T r(Y_u) du}\left(X_T-\frac{A_T}{T-t_0}\right)^+\,\mathbbm{1}(\tau\leq T)\,\right].
\end{aligned}
\end{equation}
Likewise,
\begin{equation} \label{eq:bondpriceSplit_fixed}
\begin{aligned}
C_K(s,x,a,i) & = e^{-q_i(T-s)}C_K^0(s,x,a,i) \\
        & + \mathbb{E}_{s,x,a,i}\left[\, e^{-\int_s^T r(Y_u) du}\left(\frac{A_T}{T-t_0}-K\right)^+\,\mathbbm{1}(\tau\leq T)\,\right].
\end{aligned}
\end{equation}

Notice that by conditioning the expectation in (\ref{eq:bondpriceSplit}) on the jump time $\tau=t$, $t\geq s$, we can write it as
\[
\begin{aligned}
\mathbb{E}_{s,x,a,i} & \left[\, e^{-\int_s^T r(Y_u) du}\left(X_T-\frac{A_T}{T-t_0}\right)^+\,\mathbbm{1}(\tau\leq T)\,\right]  \\
& \hspace{-1cm} = \int_s^T q_i e^{-q_i(t-s)} \mathbb{E}_{s,x,a,i} \left[\, e^{-\int_s^T r(Y_u) du}\left(X_T-\frac{A_T}{T-t_0}\right)^+\,\mid \tau=t\right]  dt \\
& \hspace{-1cm} = \int_s^T q_i e^{-q_i(t-s)} \mathbb{E}_{s,x,a,i} \left[\, e^{-r(i)(t-s)} C(t,X_t,A_t,Y_t) \mid \tau=t\right]  dt
\end{aligned}
\]
by virtue of the Markov property of $(t,X_t,A_t,Y_t)$. A similar argument holds for the fixed-strike case.

In what follows it will be convenient to work with the process
\[
Z_t:=\int_s^t\sigma(Y_u)dB_u + \int_s^t \left(r(Y_u)-\delta-\frac{1}{2}\sigma^2(Y_u)\right)du, \qquad t\geq s
\]
so that
\begin{equation}
X_t=\exp(z+Z_t), \quad z:=\ln(x).
\end{equation}

\subsection{Density of $(Z_t,A_t)$} \label{Dist}

Conditional on $X_s=e^z, A_s=a, Y_s=i$ and $\tau=t$, it follows that
\[
Z_t\stackrel{law}{=}\sigma(i)B_{t-s}+\nu(i)(t-s), \qquad \nu(i):=r(i)-\delta-\frac{1}{2}\sigma^2(i)
\]
and
\[
A_t=a+\int_s^t X_u du\stackrel{law}{=}a+e^z\int_0^{t-s} e^{\sigma(i)B_{u}+\nu(i)u} du.
\]
The pair $(Z_t,A_t)$ is independent of $Y_t$
and its distribution can be explicitly computed. To this end, define
\begin{equation} \label{Yorprocs}
A_t^{\nu}:=\int_0^t e^{2(B_u+\nu u)}du, \qquad \nu\in \mathbb{R}.
\end{equation}
The following preliminary result is due to Yor \cite{Yor1992}.

\begin{lemma}\label{lemma:Yor}
We have
\[
P(A_t^\nu \in dw \mid B_t+\nu t=z)=f(t,z,w)dw
\]
where
\[
\frac{1}{\sqrt{2 \pi t}}\exp\left( -\frac{z^2}{2t}\right)f(t,z,w)=\frac{1}{w}\exp\left(-\frac{1}{2w}(1+e^{2z})\right)\theta_{e^z/w}(t)
\]
and
\[
\theta_r(t)=\frac{r}{\sqrt{2\pi^3 t}}\exp\left(\frac{\pi^2}{2t}\right)\,\int_0^\infty e^{-y^2/2t}\,e^{-r\cosh(y)} \sinh(y)\,\sin(\pi y/t) \,dy.
\]
\end{lemma}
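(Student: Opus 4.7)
The plan is to reduce the problem to a Bessel-process computation and then invoke the Hartman-Watson distribution, following Yor's approach.

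First, I would remove the drift $\nu$ via a Brownian-bridge argument. Conditional on $B_t + \nu t = z$, the process $(B_s + \nu s)_{0 \leq s \leq t}$ is distributed as a standard Brownian bridge from $0$ to $z$ over $[0,t]$; the drift is absorbed into the endpoint condition so that the resulting bridge law does not depend on $\nu$. Consequently, writing $A_t^0 := \int_0^t e^{2B_s}\,ds$, we obtain
\[
P(A_t^\nu \in dw \mid B_t + \nu t = z) = P(A_t^0 \in dw \mid B_t = z),
\]
and it suffices to compute the joint density of $(B_t, A_t^0)$ and then divide by the Gaussian density of $B_t$.

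Next, I would use the Lamperti representation: for a Bessel process $R$ of index $0$ starting at $1$ with clock $T_u := \int_0^u R_v^{-2}\,dv$, one has the pathwise identification $(B_t, A_t^0) = (\log R_{A_t^0}, A_t^0)$ under the defining constraint $T_{A_t^0} = t$. Combining the explicit BES($0$) semigroup $p^{(0)}_w(1,r) = (r/w)\exp(-(1+r^2)/(2w))I_0(r/w)$ with the Hartman-Watson identity, which encodes the conditional density of the clock $T_u$ at time $t$ given $R_u = r$, one extracts the joint density of $(B_t, A_t^0)$ at $(z,w)$ as
\[
\tfrac{1}{w}\exp\!\left(-\tfrac{1+e^{2z}}{2w}\right)\theta_{e^z/w}(t).
\]
Dividing by the Gaussian marginal $(2\pi t)^{-1/2}\exp(-z^2/(2t))$ then yields the identity for $f(t,z,w)$ stated in the lemma.

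The principal obstacle is the closed-form identification of $\theta_r(t)$. Its Laplace transform in $t$ can be expressed through the ratio $I_\lambda(r)/I_0(r)$ of modified Bessel functions via the Feynman-Kac semigroup associated with the Bessel generator perturbed by the $-\lambda^2/(2R^2)$ potential, but inverting this Laplace transform requires a delicate contour-integration argument exploiting the analytic continuation of $I_\lambda(r)$ in the index $\lambda$ together with the classical reflection formulae for Bessel functions. This analytic inversion is the heart of Yor's original derivation and is the only technically intricate step; once it is in hand, the remaining assembly described above is routine.
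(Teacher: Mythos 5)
The paper does not include a proof of this lemma; it simply cites Proposition 2 of Yor (1992), so there is no internal argument to compare against. Your sketch is a correct reconstruction of Yor's own derivation: the drift removal via the Brownian-bridge argument, the Lamperti time-change to a Bessel process of index $0$ started at $1$, the use of the BES$(0)$ semigroup together with the Hartman-Watson conditional law of the clock $T_u$ given $R_u$, and the identification of $\theta_r(t)$ by Laplace inversion of $I_\lambda(r)/I_0(r)$ via analytic continuation in the index are precisely the steps in Yor's proof, and you correctly single out the Laplace inversion as the only genuinely delicate point.
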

We refer to Proposition 2 in \cite{Yor1992} for a proof.

\begin{proposition}\label{prop:joint_density}
The joint density $\psi(z',a')$ of the pair $(Z_t,A_t)$,
conditional on $X_s=x=e^z, A_s=a, Y_s=i$, and $\tau=t$, 
is given by
\[
\psi(z',a') =
\frac{\sigma^2(i)}{4}e^{-z}\, f\left(t', z', w(a')\right)\, \phi\left(\frac{z'-2\nu t'}{2\sqrt{t'}}\right) 1_{\{\mathbb{R}\times [a,\infty]\}},
\]
with $w(a')=\frac{\sigma^2(i)}{4}e^{-z}(a'-a)$ and $t'=\frac{\sigma^2(i)}{4}(t-s)$.

\end{proposition}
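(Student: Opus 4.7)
The plan is to reduce the joint law of $(Z_t, A_t)$ to Yor's framework via a Brownian scaling, then apply Lemma~\ref{lemma:Yor} and change variables. Starting from the in-distribution identities
\[
Z_t \stackrel{\mathrm{law}}{=} \sigma(i) B_{t-s} + \nu(i)(t-s), \qquad A_t \stackrel{\mathrm{law}}{=} a + e^z \int_0^{t-s} e^{\sigma(i) B_u + \nu(i) u}\, du,
\]
I would introduce the time change $v = \sigma^2(i) u / 4$ and define $\tilde B_v := (\sigma(i)/2)\, B_{4v/\sigma^2(i)}$, which is a standard Brownian motion by Brownian scaling. With $t' = \sigma^2(i)(t-s)/4$ and $\nu' = 2\nu(i)/\sigma^2(i)$, the algebraic identity $\sigma(i) B_u + \nu(i) u = 2(\tilde B_v + \nu' v)$ gives
\[
Z_t = 2(\tilde B_{t'} + \nu' t'), \qquad A_t = a + \frac{4 e^z}{\sigma^2(i)}\, \tilde A^{\nu'}_{t'},
\]
where $\tilde A^{\nu'}_{t'}$ is precisely Yor's process in~(\ref{Yorprocs}).

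Next, Lemma~\ref{lemma:Yor} supplies the conditional density of $\tilde A^{\nu'}_{t'}$ given $\tilde B_{t'} + \nu' t'$; multiplying by the Gaussian marginal of $\tilde B_{t'} + \nu' t'$ (mean $\nu' t'$, variance $t'$) produces the joint density of $(\tilde B_{t'} + \nu' t', \tilde A^{\nu'}_{t'})$ at a pair $(u, w)$. I would then push this density forward through the smooth bijection $(u, w) \mapsto (z', a') = \bigl(2u,\, a + (4 e^z/\sigma^2(i))\, w\bigr)$, whose constant Jacobian contributes a factor proportional to $\sigma^2(i) e^{-z}$, and recognize the Gaussian factor as $\phi\bigl((z' - 2\nu' t')/(2\sqrt{t'})\bigr)$ up to an appropriate power of $t'$. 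The support constraint $a' \geq a$ (and the vanishing of the density for $a' < a$) is immediate from $A_t - A_s = \int_s^t X_u\, du \geq 0$, since $X$ is strictly positive. Independence of $(Z_t, A_t)$ from $Y_t$ under the conditioning, which justifies writing the density without reference to $Y_t$, follows because the SDE coefficients are frozen at regime $i$ on $\{\tau = t\}$ and $B$ is independent of $Y$.

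The main obstacle I anticipate is the bookkeeping of the Brownian scaling: one must select the time change so that the integrand $e^{\sigma(i) B_u + \nu(i) u}$ matches the $e^{2(B_v + \nu' v)}$ form required by Yor's process in~(\ref{Yorprocs}), and then combine the two-dimensional Jacobian with the drift shift in the Gaussian mean so that the final formula assembles into the compact expression stated in the proposition. Aside from this algebraic bookkeeping, the argument is a direct consequence of Yor's joint law together with a standard density transformation.
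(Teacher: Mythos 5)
Your approach is essentially the same as the paper's: both reduce $(Z_t,A_t)$ to the pair $(2(B_{t'}+\nu t'), A^{\nu}_{t'})$ via the Brownian scaling with $t'=\sigma^2(i)(t-s)/4$ and $\nu'=2\nu(i)/\sigma^2(i)$, invoke Lemma~\ref{lemma:Yor} to obtain the joint law, and then change variables to pass to $(z',a')$. The only cosmetic difference is that you push forward a density through the linear bijection directly, whereas the paper writes the joint CDF as a double integral in the Yor variables and differentiates; the scaling, the key lemma, and the Jacobian bookkeeping are identical.
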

\begin{proof}
A direct consequence of Lemma \ref{lemma:Yor} is that
\[
P(\,2(B_t+\nu t)\in dz, A_t^\nu \in dw )=f\left(t,\frac{z}{2},w\right)\phi\left(\frac{z-2\nu t}{2\sqrt{t}}\right)dw\,dz
\]
where $\phi(\cdot)$ is the density of a standard normal distribution.

Henceforth, conditional on $X_s=e^z, A_s=a, Y_s=i$, and $\tau=t$,
and writing $P(\cdot \mid X_s=e^z, A_s=a, Y_s=i,\tau=t)=P(\cdot)$ for short,
we have
\[
\begin{aligned}
P(Z_t\leq z', \;A_t\leq a')& \\
& \hspace{-3cm}  =  P\left(\,\sigma(i)B_{t-s}+\nu(i)(t-s)\leq z', \; \int_0^{t-s} e^{\sigma(i)B_u+\nu(i)u}du  \leq e^{-z}(a'-a) \right) \\
& \hspace{-3cm}  =  P\left(\,2(B_{t'}+\nu t')\leq z', \; \int_0^{t'} e^{2(B_u+\nu u)} \,du  \leq \frac{\sigma^2(i)}{4}e^{-z}(a'-a)\right)
\end{aligned}
\]
where we used the scaling property $\sigma(i)B_{t-s}\stackrel{law}{=} B_{\sigma^2(i)(t-s)}$ and the change of variables
\[
t'\equiv \frac{\sigma^2(i)}{4}(t-s), \qquad \nu \equiv \frac{2\nu(i)}{\sigma^2(i)}.
\]
Finally,
\[
P(Z_t\leq z', \;A_t\leq a')=\int_{-\infty}^{z'} \int_0^{w(a')} f\left(t',\frac{z}{2},w\right)\phi\left(\frac{z-2\nu t'}{2\sqrt{t'}}\right)dw\,dz
\]
and a further change of variable from $w$ to $a'$ concludes the proof.
\end{proof}

\section{Fixed-point}\label{Aprox}

\subsection{The main contraction theorem}
The goal of this subsection is to show Theorem \ref{thm:main_intro}.

\begin{proposition}\label{prop:part1}
$F$ is a contraction mapping on $\mathcal{S}$.
\end{proposition}
\begin{proof}
For each $(s,z,a,i)$ and $t\geq s$ fixed, 
it follows that
\[ 
\int_a^\infty \int_{-\infty}^\infty e^{z'}\,\psi(z',a')\, dz'\,da'=e^{(r(i)-\delta)(t-s)},
\]
and so
\[
\begin{aligned}
\rho(i)  &:=\sum_{j\neq i}q_{i j} \int_s^T e^{-[q_i+r(i)](t-s)}\, \int_a^\infty \int_{-\infty}^\infty e^{z'}\,\psi(z',a')\, dz'\,da'\,dt \\
         &=\sum_{j\neq i}q_{i j}\int_s^T e^{-(q_i+\delta)(t-s)}dt
            = \sum_{j\neq i}\frac{q_{i j}}{q_i+\delta} \int_s^T (q_i+\delta) e^{-(q_i+\delta)(t-s)}dt \\
         & =\sum_{j\neq i}\frac{q_{i j}}{q_i+\delta} \left(1-e^{-(q_i+\delta)(T-s)}\right)
         <1.
\end{aligned}
\]
Then,
\[
\rho:=\max_{i\in \mathcal{M}}\rho(i)<1
\]
which yields the inequality $||F(H)|| \leq \rho ||H||$, as desired.
\end{proof}

Parts (ii) and (iii) of Theorem \ref{thm:main_intro} are immediate from Proposition \ref{prop:part1}.

\begin{corollary} \label{prop:part2}
If $H_0,H\in\mathcal{S}$ and $H$ solves the equation
    \begin{equation} \label{eq:gAsianF}
    H(s,z,a,i)=F(H)(s,z,a,i)+H_0(s,z,a,i)
    \end{equation}
then $H$ is the only solution.
\end{corollary}
\begin{proof}
Since $F$ is a contraction so is the translation mapping $F(\cdot)+H_0$.
Henceforth, $F(\cdot)+H_0$ has a fixed point thanks to the Banach Fixed-Point Theorem.
This in turn implies the uniqueness.
\end{proof}

\begin{corollary} \label{cor:main}
 The sequence $\{H_n\}_{n=0}^\infty$, with
 \[
 H_{n+1}(s,z,a,i)=F(H_n)(s,z,a,i)+H_0(s,z,a,i),
 \]
 converges to the fixed-point $H$ with geometric rate of converge
    \[
    \rho=\max_{i\in \mathcal{M}} \sum_{j\neq i}\frac{q_{i j}}{q_i+\delta} \left(1-e^{-(q_i+\delta)(T-s)}\right)<1.
    \]
\end{corollary}
\begin{proof}
Thanks to Corollary \ref{prop:part2}, $\{H_n\}_{n=0}^\infty$ converges to $H$ in the supremum norm.
We have that
$H_{n+1}-H=F(H_n)-F(H)=F(H_n-H)$. Then using the fact that $F$ is a contraction,
\[
||H_{n+1}-H||\leq \rho ||H_{n}-H||
\]
and
\[
||H_{n+1}-H_n||\leq \rho^n ||H_1-H_0||
\]
where $\rho$ is defined in Proposition \ref{prop:part1}.
\end{proof}

\subsection{Floating-strike case} \label{sub:float}
In this subsection, we show Theorem \ref{thm:floating}.

Consider the functions
\[
g(s,z,a,i)  =e^{-z}C(s,e^z,a,i), \qquad
g^0(s,z,a,i) =e^{-z}C^0(s,e^z,a,i).
\]

Observe that (\ref{eq:bondpriceSplit}) can be written as
\begin{equation} \label{eq:gAsian}
\begin{aligned}
g(s,z,a,i)&=e^{-q_i(T-s)}g^0(s,z,a,i) \\
        &\hspace{-1.5cm} +\int_s^T q_ie^{-q_i(t-s)}\mathbb{E}_{s,x,a,i} \left[\, e^{-r(i)(t-s)} e^{Z_t} g(t,z+Z_t,A_t,Y_t) \mid \tau=t\right]  dt.
\end{aligned}
\end{equation}

Moreover,
\[
\begin{aligned}
\mathbb{E}_{s,x,a,i} \left[\, e^{-r(i)(t-s)} e^{Z_t} g(t,z+Z_t,A_t,Y_t) \mid \tau=t\right] & \\
& \hspace{-6cm} = \sum_{j\neq i} \frac{q_{i j}}{q_i} \int_a^\infty \int_{-\infty}^\infty e^{-r(i)(t-s)} e^{z'}g(t,z+z',a',j)\,\psi(z',a')\, dz'\,da'
\end{aligned}
\]
where $\psi$ is the density in Proposition \ref{prop:joint_density},
and the second term on the right-hand side of equation (\ref{eq:gAsian}) then reads
\[
\sum_{j\neq i}q_{i j} \int_s^T e^{-[q_i+r(i)](t-s)}\int_a^\infty \int_{-\infty}^\infty e^{z'}g(t,z+z',a',j)\,\psi(z',a')\, dz'\,da'\,dt.
\]
This is the mapping $F$ as defined in (\ref{eq:mapping}) and we can further write
\begin{equation} \label{eq:floating}
g(s,z,a,i)=F(g)(s,z,a,i)+e^{-q_i(T-s)}g^0(s,z,a,i).
\end{equation}

\subsection{Fixed-strike case}
In this subsection, we show Theorem \ref{thm:fixed}.

For $s\leq t_0$ (starting and forward-starting options), consider the functions
\[
h(s,z,a,i)  =e^{-z}C_K(s,e^z,a,i), \qquad
h^0(s,z,a,i) =e^{-z}C_K^0(s,e^z,a,i).
\]
In this case, similar in structure to the floating-strike, we obtain the equation,
\begin{equation}\label{eq:fixed_starting}
h(s,z,a,i)=F(h)(s,z,a,i)+e^{-q_i(T-s)}h^0(s,z,a,i), \qquad s\leq t_0.
\end{equation}

For $s>t_0$ (in-progress options), consider the functions
\[
\begin{split}
\tilde{h}(s,z,a,i)  &=e^{-z}\left(C_K(s,e^z,a,i)-\frac{a}{T-t_0}\right), \\
\tilde{h}^0(s,z,a,i) &=e^{-z}\left(C^0_K(s,e^z,a,i)-\frac{a}{T-t_0}\right),
\end{split}
\]
so that (\ref{eq:bondpriceSplit_fixed}) can be written as
{\small \[
\begin{aligned}
\tilde{h}(s,x,a,i) +\frac{ae^{-z}}{T-t_0}& = e^{-q_i(T-s)}\left(\tilde{h}^0(s,x,a,i)+\frac{ae^{-z}}{T-t_0}\right) \\
& \hspace{-3.5cm} + \int_s^T q_ie^{-q_i(t-s)}\mathbb{E}_{s,x,a,i}\left[e^{-r(i)(t-s)}\left(e^{Z_t}\tilde{h}(t,z+Z_t,A_t,Y_t)+\frac{ae^{-z}}{T-t_0}\right) \mid \tau=t\right]dt.
\end{aligned}
\]}
After some algebraic manipulation, we obtain the equation, for $s>t_0$,
\[
\tilde{h}(s,z,a,i)=F(\tilde{h})(s,z,a,i)+e^{-q_i(T-s)}\tilde{h}^0(s,z,a,i)+\tilde{h}^1(s,z,a,i),
\]
where the extra term is $\tilde{h}^1$ is in (\ref{extra_term}).

\subsection{Iteration}

Theorem \ref{thm:main_intro} provides an iterative method to approximate the Asian option functions.
For instance, we can approximate
\[
g(s,z,a,i)=e^{-z}C(s,e^z,a,i)
\]
with $z=\ln(x)$ by a fixed small error, say $\epsilon>0$:
\[
\begin{aligned}
g_0(s,z,a,i) & =e^{-(q_i(T-s)+z)}\,C^0(s,e^{z},a,i), \\
g_{n+1}(s,z,a,i) & = F(g_n)(s,z,a,i)+g_0(s,z,a,i), \qquad n\geq 0\\
\mbox{If} &\quad ||g_{n+1}-g_n||<\epsilon,\quad \mbox{stop}.
\end{aligned}
\]
Observe that the larger the dividend rate $\delta$, the faster the convergence. This can be implied from the expression for $\rho(i)$ above.

While the algorithm is theoretically appealing and provides an alternative to solving a certain system of PDEs as it is usual in option pricing,
we should mention that in order to approximate the function $g$ (and then $C)$, it is necessary to compute first the initial function $C^0$ for the iteration.
A good estimation of $C^0$ is important to avoid amplifying the error in the iteration. In this regards, it is known
that the so-called Hartman-Watson density appearing in the definition of the density $\psi$ is indeed difficult to implement. 
To analyze the effect of the error incurred by such approximation, suppose that the initial function for the iteration is, say $\tilde{g}_0 \in \mathcal{S}$.
Then the mapping $F(\cdot)+\tilde{g}_0$ is also a contraction with the same rate of convergence $\rho$.
Moreover, the fixed-point theorem implies that the sequence $\{\tilde{g}_n\}_{n\geq 0}$ defined by
\[
\tilde{g}_{n+1}:=F(\tilde{g}_n)+\tilde{g}_0, \qquad n\geq 0
\]
converges to a fixed-point, say $\tilde{g}$, which solves the equation
\[
\tilde{g}=F(\tilde{g})+\tilde{g}_0.
\]
Hence, we can check that
\[
||\tilde{g}-g||\leq \frac{1}{1-\rho}||\tilde{g}_0-g_0||.
\]
In other words, the accuracy of the algorithm depends proportionally on the error incurred at the initial step.

\appendix

\section{Proofs}

\begin{proof}[Proof of Lemma \ref{lemma:bound}]
Define the probability measure $P^*$ equivalent to $P$ via the Radon Nikodym derivative
\[
\frac{dP^*}{dP}\mid_{\mathcal{F}_T}= \mathcal{E}_T
\]
where
\[
\mathcal{E}_t:=\exp\left( \int_0^t \sigma(Y_u)dB_u-\frac{1}{2}\int_0^t \sigma^2(Y_u)du \right).
\]

The call option satisfies
\[
\begin{aligned}
\frac{C(s,x,a,i)}{x}    & =\mathbb{E}_{s,x,a,i}\left[ \frac{e^{-\int_s^T r(Y_u)du} X_T}{x} \left(1-\frac{1}{T-t_0}\frac{A_T}{X_T}\right)^+ \right] \\
                        & =\mathbb{E}_{s,x,a,i}\left[ \mathcal{E}_T \mathcal{E}^{-1}_s \,e^{-\delta(T-s)} \left(1-\frac{1}{T-t_0}\frac{A_T}{X_T}\right)^+ \right] \\
                        & =\mathbb{E}_{s,x,a,i}^*\left[e^{-\delta(T-s)}\left(1-\frac{1}{T-t_0}\frac{A_T}{X_T}\right)^+ \right]\leq 1.
\end{aligned}
\]
where the expectation $\mathbb{E}^*$ is with respect to $P^*$. 
The result is now clear.
\end{proof}

\begin{proof}[Proof of Lemma \ref{lemma:sym}]
Following up the proof of Lemma \ref{lemma:bound}, we have that
\[
C^0(s,x,a,i)=\mathbb{E}_{s,x,a,i}^*\left[e^{-\delta(T-s)}\left(x-\frac{x}{T-t_0}\frac{A_T}{X_T}\right)^+\,\mid\, Y_t=i, \forall t\in[s,T] \right],
\]
and $\hat{B}_u=B_u-\int_0^u\sigma(Y_s)ds$ is a Brownian motion under $P^*$. Here,
\[
x\frac{A_T}{X_T} =\frac{x}{X_T}\left(a+\int_s^T X_u\,du\right).
\]

The process $(B^*_u)_{s\leq u\leq T}$, defined by $B^*_u:=B^*_{s}+\hat{B}_{T+s-u}-\hat{B}_T$ with $B^*_{s}$ a constant,
is also a Brownian motion under $P^*$ starting at $B^*_{s}$.
Now, conditional on $X_s=x$, $A_s=a$, and $Y_t=i$ for all $t\in[s,T]$,
\[
\begin{aligned}
\frac{x}{X_T}   & =\exp\left(\sigma(i)(\hat{B}_s-\hat{B}_T)+\left[r(i)-\delta+\frac{\sigma^2(i)}{2}\right](s-T)\right) \\
                & \stackrel{law}{=}\exp\left(\sigma(i)(B^*_T-B^*_s)+\left[\delta-r(i)-\frac{\sigma^2(i)}{2}\right](T-s)\right)
\end{aligned}
\]
and
\[
\begin{aligned}
x\int_s^T \frac{X_u}{X_T} & =\int_{s}^T x\exp\left(\sigma(i)(\hat{B}_u-\hat{B}_T)+\left[r(i)-\delta+\frac{\sigma^2(i)}{2}\right](u-T) \right)du \\
                & \hspace{-1cm} \stackrel{law}{=}\int_{s}^T x\exp\left(\sigma(i)(B^*_{T+s-u}-B^*_{s})+\left[\delta-r(i)-\frac{1}{2}\sigma^2(i)\right](T-u) \right)du \\
                & \hspace{-1cm} =\int_{s}^T x\exp\left(\sigma(i)(B^*_w-B^*_{s})+\left[\delta-r(i)-\frac{1}{2}\sigma^2(i)\right](w-s) \right)dw
\end{aligned}
\]
where 
the third equality is obtained after the change of variable $w=T+s-u$. Therefore, $C^0(s,x,a,i)$ is given by
\[
\mathbb{E}^*_{s,x,a,i}\left[e^{-\delta(T-s)} \left(x-\frac{a}{x(T-t_0)}X_T^*-\frac{T-s}{T-t_0}\frac{1}{T-s}\int_s^T X_u^*du \right)^+\right]
\]
where the underlying process $X^*$ follows
\[
dX^*_t=X^*_t[(\delta-r(i))dt+\sigma(i)dB^*_t], \quad X^*_{s}=x, \qquad t\geq s.
\]
Defining the parameters $\lambda=\frac{a}{x(T-t_0)}$ and $\beta=\frac{T-s}{T-t_0}$ the proof is complete.
\end{proof}

\begin{proof}[Proof of Lemma \ref{lemma:bound_fixed}]
Part (i). Let $s\leq t_0$. Then
\[
\begin{split}
C_K(s,x,0,i)    & \leq \mathbb{E}_{s,x,0,i}\left[e^{-\int_s^T r(Y_u)du}\frac{A_T}{T-t_0}\right] \\
                & \leq \frac{1}{T-t_0}\int_{t_0}^T \mathbb{E}_{s,x,0,i}\left[e^{-\int_s^t r(Y_u)du}X_t\right]dt \\
                & \leq \frac{x}{T-t_0}\int_{t_0}^Te^{-\delta(t-s)}dt \leq x.
\end{split}
\]
Part(ii). Let $s>t_0$. Then
\[
\begin{split}
C_K(s,x,a,i)    & \leq \mathbb{E}_{s,x,a,i}\left[e^{-\int_s^T r(Y_u)du}\left(\frac{a+\int_s^T X_t dt}{T-t_0}\right)\right] \\
                &\leq \frac{a}{T-t_0}
                +\left(\frac{T-s}{T-t_0}\right)\frac{1}{T-s}\int_{s}^T \mathbb{E}_{s,x,a,i}\left[e^{-\int_s^t r(Y_u)du}X_t\right]dt \\
                & \leq \frac{a}{T-t_0}+\left(\frac{T-s}{T-t_0}\right)x
                 \leq \frac{a}{T-t_0}+x.
\end{split}
\]
\end{proof}

\subsection*{Acknowledgment}
I thank the referees for their valuable comments, it has significantly improved this work.

\end{document}